\newtheorem{theorem}{Theorem}
\journal{ }
\def\ps@pprintTitle{%
   \let\@oddhead\@empty
   \let\@evenhead\@empty
   \let\@oddfoot\@empty
   \let\@evenfoot\@oddfoot
}
\begin{document}
\begin{frontmatter}
\title{Real-time forecasting of metro origin-destination matrices with high-order weighted dynamic mode decomposition}

\author[label1]{Zhanhong Cheng}
\ead{zhanhong.cheng@mail.mcgill.ca}

\author[label2]{Martin Tr\'epanier}
\ead{mtrepanier@polymtl.ca}

\author[label1]{Lijun Sun\corref{cor1}}
\ead{lijun.sun@mcgill.ca}

\address[label1]{Department of Civil Engineering, McGill University, Montreal, QC H3A 0C3, Canada}
\address[label2]{Department of Mathematics and Industrial Engineering, Polytechnique Montreal, Montreal, QC H3T 1J4, Canada}

\cortext[cor1]{Corresponding author. Address: 492-817 Sherbrooke Street West, Macdonald Engineering Building, Montreal, Quebec H3A 0C3, Canada}

\begin{abstract}
    Forecasting the short-term ridership among origin-destination pairs (OD matrix) of a metro system is crucial in real-time metro operation. However, this problem is notoriously difficult due to the high-dimensional, sparse, noisy, and skewed nature of OD matrices. This paper proposes a High-order Weighted Dynamic Mode Decomposition (HW-DMD) model for short-term metro OD matrices forecasting. DMD uses Singular Value Decomposition (SVD) to extract low-rank approximation from OD data, and a low-rank high-order vector autoregression model is established for forecasting. To address a practical issue that metro OD matrices cannot be observed in real-time, we use the boarding demand to replace the unavailable OD matrices. Particularly, we consider the time-evolving feature of metro systems and improve the forecast by exponentially reducing the weights for old data. Moreover, we develop a tailored online update algorithm for HW-DMD to update the model coefficients daily without storing historical data or retraining. Experiments on data from a large-scale metro system show the proposed HW-DMD is robust to the noisy and sparse data and significantly outperforms baseline models in forecasting both OD matrices and boarding flow. The online update algorithm also shows consistent accuracy over a long time when maintaining an HW-DMD model at low costs.
\end{abstract}

\begin{keyword}
    origin-destination matrices, ridership forecasting, dynamic mode decomposition, public transport systems, high-dimensional time series, time-evolving system
\end{keyword}

\end{frontmatter}

\section{Introduction}
The metro is a green and efficient travel mode that plays an ever-important role in urban transportation. An accurate real-time ridership/demand forecast is crucial for a metro system's efficiency and reliability. With the wide application of smart card systems and all kinds of sensors, forecasting the real-time metro ridership has become a research hotspot in recent years. Existing research mainly focuses on forecasting the short-term (e.g., 15 or 30 minutes) boarding or alighting ridership at metro stations, such as \cite{wei2012forecasting, sun2015novel, li2017forecasting, chen2019subway, liu2019deeppf, zhang2020deep}. In contrast, forecasting the short-term ridership at origin-destination (OD) pairs of a metro system receives little attention. The ridership among all OD pairs of a metro system can be organized into a matrix. For simplicity, an ``OD matrix'' in this paper refers to the ridership of this matrix at a certain (short) time interval.

Forecasting metro OD matrices has much broader applications than the station-level ridership forecast. For example, by assigning OD matrices to a metro network, we can predict and thus regulate each metro train's crowdedness. The station-level boarding/alighting flow also can be calculated from an OD matrix. However, the real-time forecast of metro OD matrices is extremely difficult for the following reasons. (1) The first challenge is the \textit{high dimensionality}. The number of OD pairs of a metro system is squared of the number of stations, often tens of thousands in practice. Moreover, (2) short-term OD matrices of a metro system are \textit{sparse}, and the ridership/flow distribution within an OD matrix is highly \textit{skewed} (Fig.~\ref{fig:histogram}). Besides, unlike the boarding or alighting flow, (3) a metro system's OD matrices are not available in real-time (\textit{delayed data availability}). Because an OD matrix can only be obtained after all the trips belonging to the OD matrix have reached their destinations. Lastly, (4) the complex dynamics of a metro system are \textit{time-evolving}; a well-tuned model may have a short ``shelf life'' and has expensive retrain/re-tune costs in long-term maintenance. Although a few studies tried to forecast the real-time metro OD matrices by matrix factorization methods \citep{gong2018network, gong2020online} or deep learning models \citep{toque2016forecasting, zhang2021short, shen2020hybrid}, no existing solution overcomes all the four challenges above.

To address the above challenges, this paper proposes an effective model for real-time metro OD matrices forecasting. We use the Dynamic Mode Decomposition (DMD) \citep{schmid2010dynamic}, a recent advance in the fluid flow community, to extract the dominating dynamics from the high-dimensional noisy OD sequence. We extend the original DMD model by a high-order vector autoregression to incorporate long-term temporal correlations. In dealing with the delayed data availability problem, we replace the unavailable latest OD matrices with snapshots of boarding flow. We consider the time-evolving dynamics and introduce a forgetting ratio to reduce the weights of old data exponentially. We name the proposed model High-order Weighted Dynamic Mode Decomposition (HW-DMD). Moreover, we develop a tailored online update algorithm that updates an HW-DMD's coefficients daily without storing historical data or retraining the model, which greatly reduces the model maintenance costs for long-term implementations. Finally, the proposed model is tested on a Guangzhou metro smart card dataset with 159 stations. Experiments show HW-DMD excellently handles the sparse, skewed, and noise OD data and significantly outperforms baseline models in both the OD matrices and the boarding flow forecast. The online update algorithm also shows consistent accuracy in updating an HW-DMD model over a long period.

The online HW-DMD model is applied to the metro OD matrices forecast problem, but it can be readily applied to general (high-dimensional) traffic flow forecast problems, such as in recent studies about DMD-based traffic flow forecasting \citep{avila2020data, yu2020low}. We summary main contributions of this paper as follows:
\begin{itemize}
    \item This paper proposes an HW-DMD model that addresses various difficulties of the real-time metro OD matrices forecasting. Experiments show the forecast of HW-DMD is significantly better than conventional models.
    \item The time-evolving dynamics of a transportation system and the maintenance of a forecast model are often ignored in the literature. This paper considers the time-evolving feature of a metro system by reducing the weights for old data and shows improved performance. An online update algorithm is proposed to reduce the long-term maintenance cost of the HW-DMD model under a time-evolving metro system.
\end{itemize}

The remainder of this paper is organized as follows. We review related work on the short-term OD matrices forecasting in section \ref{cha5_sec:literature}. Next, a description of the metro OD matrices forecasting problem is presented in section \ref{sec:problem}. Section \ref{sec:DMD} briefly introduces the DMD algorithm, which serves as the base for the proposed HW-DMD model. Section \ref{sec:HW-DMD} is the core part of this paper, where the model specification, estimation, and the online update method for HW-DMD are elaborated. Section \ref{sec:experiments} is about experiments on the Guangzhou metro dataset. Conclusions and future directions are summarized in section \ref{sec:conclusions}.

\section{Related Work}\label{cha5_sec:literature}

In the literature, only a few studies have explored the real-time OD matrix forecasting problem for a ``metro'' system. Therefore, we extend the range to OD demand forecasting for general road transportation modes, such as the ride-hailing system and the highway tolling system. Note that for a ride-hailing system, the origins and destinations are often defined as zones on a grid.

Matrix/tensor factorization is an effective method to tackle the high-dimensionality problem of OD matrix forecasting. For example, \cite{ren2017efficient} applied Canonical Polyadic (CP) decomposition to an $origin \times destination \times vehicle\_type \times time$ tensor from highway tolling data. Time series models were then built on the latent temporal matrix to forecast OD matrices. \cite{dai2018short} and \cite{liu2020dynamic} used principal component analysis (PCA) to reduce the dimensionality of OD data and applied several machine learning models to the reduced data for OD flow forecasting. \cite{gong2020online} developed a matrix factorization model to forecast the OD matrices of a metro system. Their work highlights a solution to the delayed data availability problem and various spatial and temporal regularization techniques are introduced to improve the forecast. In summary, the matrix/tensor factorization-based OD matrix forecasting consists of two components: (1) a dimensionality reduction by factorization and (2) a forecasting model applied to the reduced data. 

Deep learning is another mainstream method for OD matrix forecasting. In an early study, \cite{toque2016forecasting} used Long Short-Term Memory (LSTM) networks to forecast the OD matrices of a transit network. They only applied the model to selected high-flow OD pairs because of the high dimensionality and sparsity problems. Convolutional Neural Networks (CNN) and Graph Convolutional Networks (GNN) are two deep learning models that greatly reduce the model size compared with a fully connected neural network. Recently, using CNN/GCN to capture spatial correlations and LSTM to capture temporal correlations started to become the ``standard configuration'' for deep learning-based OD matrix forecasting. For example, \cite{chu2019deep} used multi-scale convolutional LSTM to forecast the real-time taxi OD demand, and \cite{wang2019origin, wang2020multi} used multi-task learning to improve the OD flow forecast of GCN+LSTM networks. A large body of literature focused on better utilizing the spatial/semantic correlations by optimizing the GNN structure or incorporating side information. Such as the local spatial context used by \cite{liu2019contextualized}, the Spatio-Temporal Encoder-Decoder Residual Multi-Graph Convolutional network (ST-ED-RMGC) proposed by \cite{ke2021predicting}, and the Dynamic Node-Edge Attention Network (DNEAT) developed by \cite{zhang2021dneat}. Some studies combined deep learning models with other models to complement each other. In this direction, \citep{xiong2020dynamic} combined GCN with Kalman filter to forecast the OD matrices of a Turnpike network. \cite{shen2020hybrid} mixed CNN with a Gravity model to forecast OD matrices of a metro system. \cite{hu2020stochastic} considered the travel time between OD pairs as a stochastic variable, and developed a stochastic OD matrix forecasting model based on tensors factorization and GCN. \cite{noursalehi2021dynamic} used discrete wavelet transform to decompose OD matrices into frequency domain; the outputs were fed into CNN and Convolutional-LSTM networks for forecasting.

The performances of deep learning models are often impaired by the noise in sparse metro OD matrices. To reduce the impact of the noise, \cite{zhang2019short, zhang2021short} developed a metric called OD attraction degree (ODAD) to mask insignificant OD pairs. \cite{zhang2019short} showed that masking near-zero OD pairs improves the forecasting accuracy of an LSTM. Based on ODAD, \cite{zhang2021short} developed a Channel-wise Attentive Split-CNN (CAS-CNN) model for metro OD matrix forecasting. Another merit of this work is they considered the delayed data availability problem.

In summary, matrix/tensor factorization, CNN, and GCN all aim to reduce model size while maintaining spatial/temporal correlations/dependencies. The HW-DMD model proposed in this paper belongs to the matrix factorization category. Although some ride-hailing systems may not have the delayed data availability problem, most research essentially omitted this problem for simplicity. Particularly, RNN-based deep learning models can barely work without the most recent OD matrices as inputs. In dealing with the delayed data availability problem, existing solutions \citep{gong2020online, zhang2021short, xiong2020dynamic} used alternative quantities (e.g., boarding ridership, link flow) to compensate for the unavailable OD information. We also adopt this approach in the proposed model.


\section{Problem Description}\label{sec:problem}
Many modern metro systems record passengers' entry and exit information using smart cards. We thus know the origin and destination stations, the start and end time for every trip in such a system. Given a fixed time interval (30 minutes in this paper), we denote by $o_{t,i,j}$ the number of trips that depart from station $i$ at the $t$-th interval to station $j$. We call $o_{t,i,j}$ an \textit{OD flow}. Next, we can describe the number of trips between every OD pair in the system at the $t$-th time interval by an \textit{OD matrix}
\begin{equation*}
    O_t = \left[ \begin{matrix}
        o_{t,1,1}&		\cdots&		o_{t,1,s}\\
        \vdots&		\ddots&		\vdots\\
        o_{t,s,1}&		\cdots&		o_{t,s,s}\\
    \end{matrix} \right]\in \mathbb{R}^{s\times s},
\end{equation*}
where $s$ is the number of metro stations. The diagonal elements of a metro OD matrices are always zero. We keep these zero elements because they have a negligible effect on the forecast. In our model, OD matrices are organized in a vector form
\begin{equation*}
    \mathbf{f}_t = \operatorname{vec}(O_t) = [o_{t,1,1}, \cdots, o_{t,s,1}, o_{t,1,2},\cdots, o_{t,s,2}, \cdots, o_{t,1,s}, \cdots, o_{t,s,s}]^\top\in \mathbb{R}^{n},
\end{equation*}
where $n=s \times s$ is the number of OD pairs. For convenience, we refer to $\mathbf{f}_t$ as an \textit{OD snapshot}.

Note that OD snapshots are aggregated by the time when passengers enter the system; the exit time might be in a different time interval. Therefore, the true OD snapshot for interval $t$ can only be obtained after all those passengers entered at interval $t$ have reached their destinations; it cannot be observed in real-time (i.e., the delayed data availability). In other words, we often do not have access to $\mathbf{f}_{t}$ when forecasting $\mathbf{f}_{t+1}$. In contrast, the boarding (entering) flow--another important quantity--is observable in real-time. We denote by $b_{t,i}$ the number of passengers entering station $i$ at interval $t$. In fact, we have $b_{t,i}=\sum_{j}{o_{t,i,j}}$. We define a \textit{boarding snapshot} as a vector $\mathbf{b}_{t}=[b_{t,1}, b_{t,2}, \cdots, b_{t,s}]^\top$.

The OD matrices/flow forecasting problem is to forecast future OD snapshots $\mathbf{f}_{t+1}, \mathbf{f}_{t+2},\allowbreak \cdots, \mathbf{f}_{t+L}$ given a sequence of available historical OD snapshots $\mathbf{f}_{1}, \mathbf{f}_{2}, \cdots, \mathbf{f}_{t}$ and boarding snapshots $\mathbf{b}_{1}, \mathbf{b}_{2}, \cdots, \mathbf{b}_{t}$. The reason for using boarding snapshots is to compensate for the delayed data availability problem of recent OD snapshots.

\section{Dynamic Mode Decomposition}\label{sec:DMD}
Dynamic Mode Decomposition \citep[DMD,][]{schmid2010dynamic} is developed by the fluid dynamic community to extract dynamic features from high-dimensional data. To better illustrate our forecasting model, we briefly introduce DMD in this section.

Consider using a linear dynamical system $\mathbf{f}_{i} \approx A \mathbf{f}_{i-1}$ for OD flow forecasting. Similar to many fluid problems, $n$ is huge for an OD snapshot and even storing $A\in \mathbb{R}^{n\times n}$ can be prohibitive. Therefore, DMD outputs the (leading) eigenvalues and eigenvectors of $A$ without calculating the expensive $A$. The eigenvectors of $A$ are referred to as the DMD modes and have clear physical meaning. Each DMD mode is associated with an oscillation frequency and a decay/growth rate determined by its eigenvalue. DMD is also connected to Koopman theory and can model complex non-linear systems by constructing proper measurements \citep{rowley2009spectral}. There are many variant algorithms for DMD. We only present the \textit{exact DMD} proposed by \cite{tu2014dynamic}, which is closely related to this paper.

We arrange OD snapshots into $m$-column matrices $Y_i = [\mathbf{f}_{i-m+1}, \mathbf{f}_{i-m+2}, \cdots , \mathbf{f}_{i}] \in \mathbb{R}^{n \times m}$. Typically, $m \ll n$. The linear dynamical system follows $Y_{t} \approx A Y_{t-1}$. The exact DMD seeks the leading eigenvalues and eigenvectors of the best-fit linear operator $A$ by the following procedure.
\begin{itemize}
    \item[1.] Compute the truncated singular value decomposition (SVD) of $Y_{t-1}\approx U\Sigma V^{\top}$, where $U\in \mathbb{R}^{n\times r}$, $\Sigma\in \mathbb{R}^{r\times r}$, and $V\in \mathbb{R}^{m\times r}$ and $r\ll m$.
    \item[2. ] Instead of computing the full matrix $A = Y_{t} Y_{t-1}^{+} \approx Y_{t}V\Sigma^{-1}U^{\top}$.\footnote{$(\cdot)^{+}$ denotes the Moore-Penrose inverse of a matrix.} We define a reduced matrix $\tilde{A}=U^{\top}AU \approx U^{\top}Y_{t}V\Sigma^{-1} \in \mathbb{R}^{r \times r}$. It can be proved that $\tilde{A}$ and $A$ have the same nonzero leading eigenvalues \citep{tu2014dynamic}.
    \item[3. ] Compute the eigenvalue decomposition $\tilde{A}W=W\Lambda$. The entries of the diagonal matrix $\Lambda$ are also the eigenvalues of the full matrix $A$.
    \item[4. ] The DMD modes (eigenvectors of $A$) can be obtained by $\Phi=Y_t V \Sigma^{-1} W$.
\end{itemize}

\begin{figure}
\begin{center}
\includegraphics[]{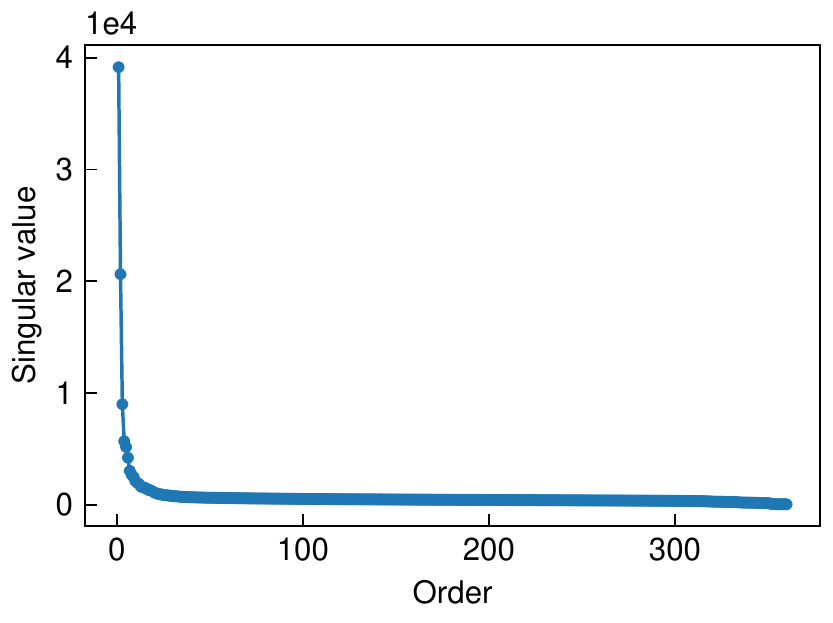}
\caption{The singular values of a ten-day-length $Y_{t-1}$ collected from Guangzhou metro smart card system.}\label{fig:singular value}
\end{center}
\end{figure}

Figure~\ref{fig:singular value} shows the singular values of a ten-day-length $Y_{t-1}$ from the Guangzhou metro system. A few leading singular values explain a significant portion of the variance, confirming the low-rank feature of OD snapshots data. DMD-based model can thus greatly reduce the dimensionality/complexity of such a dynamic system. However, the exact DMD has some limitations for the OD flow forecasting problem. Firstly, the complex temporal correlation of OD flow cannot be well captured by a linear dynamical system. Moreover, using the last OD snapshot is impractical since OD snapshots cannot be observed in real-time. To address these problems, we propose our solution in the next section.

\section{High-order Weighted Dynamic Mode Decomposition}\label{sec:HW-DMD}

\subsection{Model specification}\label{sec:specification}
The forecasting formula of an exact DMD amounts to a high-dimensional vector autoregression of order 1. However, the latest OD snapshots are unknown at the time of forecasting. Therefore, we use the two latest snapshots of the boarding flow as a replacement. We regard OD snapshots of three or more intervals ago as available; because we find more than 96\% trips in our data set are completed within one hour (two lags). And we can use a high-order vector autoregression to capture the long-term correlations in OD snapshots. The forecasting model follows
\begin{equation}
    \mathbf{f}_{i} \approx A_{t, 1}\mathbf{f}_{i - q_1} + A_{t, 2}\mathbf{f}_{i-q_2} + \cdots + A_{t, h}\mathbf{f}_{i-q_h} + A_{t, b1}\mathbf{b}_{i-1} + A_{t, b2}\mathbf{b}_{i-2} \quad \forall i \in \{q_h+1, q_h+2, \cdots, t \}, \label{eq:model}
\end{equation}
where time lags for OD snapshots are positive integers satisfying $3\leq q_1 < \cdots < q_h < t$. Note that coefficient matrices $A_{t,1}, \cdots, A_{t,h}\in \mathbb{R}^{n\times n}$ and $A_{t,b1}, A_{t,b2}\in\mathbb{R}^{n\times s}$ are estimated using the data up to the latest ($t$-th) time interval; they are re-estimated when new data become available. This allows model coefficients to be time-varying. We will introduce how to update coefficient matrices using new observations without storing historical data in Section~\ref{sec:online}.

To express Eq.~\eqref{eq:model} in a concise matrix form, let $Y_i = [\mathbf{f}_{i-m+1}, \mathbf{f}_{i-m+2}, \cdots , \mathbf{f}_{i}]$ and $B_{i}=[\mathbf{b}_{i-m+1}, \mathbf{b}_{i-m+2}, \cdots , \mathbf{b}_{i}]$, where $m=t-q_h$ is the number of target snapshots. Then, Eq.~\eqref{eq:model} is equivalent to
\begin{align}
    Y_t & \approx A_{t, 1}Y_{t - q_1} + A_{t, 2}Y_{t-q_2} + \cdots + A_{t, h}Y_{t-q_h} + A_{t, b1}B_{t-1}+ A_{t, b2}B_{t-2}\label{eq:var matrix}\\
    & =[A_{t,1},\  A_{t,2}, \cdots A_{t,h},\ A_{t, b1},\ A_{t, b2}]
    \left[ \begin{array}{c}
        Y_{t-q_1}\\
        Y_{t-q_2}\\
        \vdots \\
        Y_{t-q_h}\\
        B_{t-1}\\
        B_{t-2}
    \end{array} \right]\\
    & = G_t X_t,
\end{align}
where $G_t \in \mathbb{R}^{n\times (hn+2s)}$ and $X_t \in \mathbb{R}^{(hn+2s)\times m}$ are augmented matrices for coefficients and data, respectively. Note that with this approach, we model forecasting as a regression problem without considering the inter-sequence dependence.

We next introduce a forgetting ratio $\rho$ ($0<\rho\le 1$) that assigns small weights on snapshots to past days. This is because the dynamics of the system may change over time and we prefer to use the most recent dynamics to achieve accurate forecasting. The matrix $G_t$ can be solved by the following optimization problem
\begin{equation}
    \min_{G_t} \sum_{i=1}^{m}\rho^{\mathrm{day}(m)-\mathrm{day}(i)} {\left\| \mathbf{y}_i - G_t \mathbf{x}_i \right\|_F^2}, \label{eq:weighted loss}
\end{equation}
where $\mathbf{y}_i$ and $\mathbf{x}_i$ are the $i$-th column of $Y_t$ and $X_t$, respectively; $\mathrm{day}(i)$ represents the day of the snapshot $\mathbf{y}_i$. We assign the same weight for snapshots of the same day. The weight $\rho^{\mathrm{day}(m)-\mathrm{day}(m)}=\rho^{0}=1$ for the latest OD snapshot. For a snapshot in $j$ days ago, the weight is $\rho^{j}$, which decreases exponentially. This weighting idea is similar to works by \citet{alfatlawi2020incremental}, \citet{zhang2019online}, and \citet{kwak2020travel}. For convenience, we define $\sigma = \sqrt{\rho}$ and the weighted version of $Y_t$ and $X_t$ as
\begin{align*}
    Y_t^w &= [\sigma^{\mathrm{day}(m)-\mathrm{day}(1)} \mathbf{y}_1, \ \sigma^{\mathrm{day}(m)- \mathrm{day}(2)} \mathbf{y}_2, \cdots,  \mathbf{y}_m], \\
    X_t^w &= [\sigma^{\mathrm{day}(m)-\mathrm{day}(1)} \mathbf{x}_1, \ \sigma^{\mathrm{day}(m)-\mathrm{day}(2)} \mathbf{x}_2, \cdots,  \mathbf{x}_m].
\end{align*}
Then, the optimization problem in Eq.~\eqref{eq:weighted loss} becomes an ordinary least squares problem
\begin{equation}
    \min_{G_t} \left\| Y_t^w - G_t X_t^w\right\|_F^2. \label{eq:weighted loss2}
\end{equation}

\begin{figure}
\begin{center}
\includegraphics[width=0.9\textwidth]{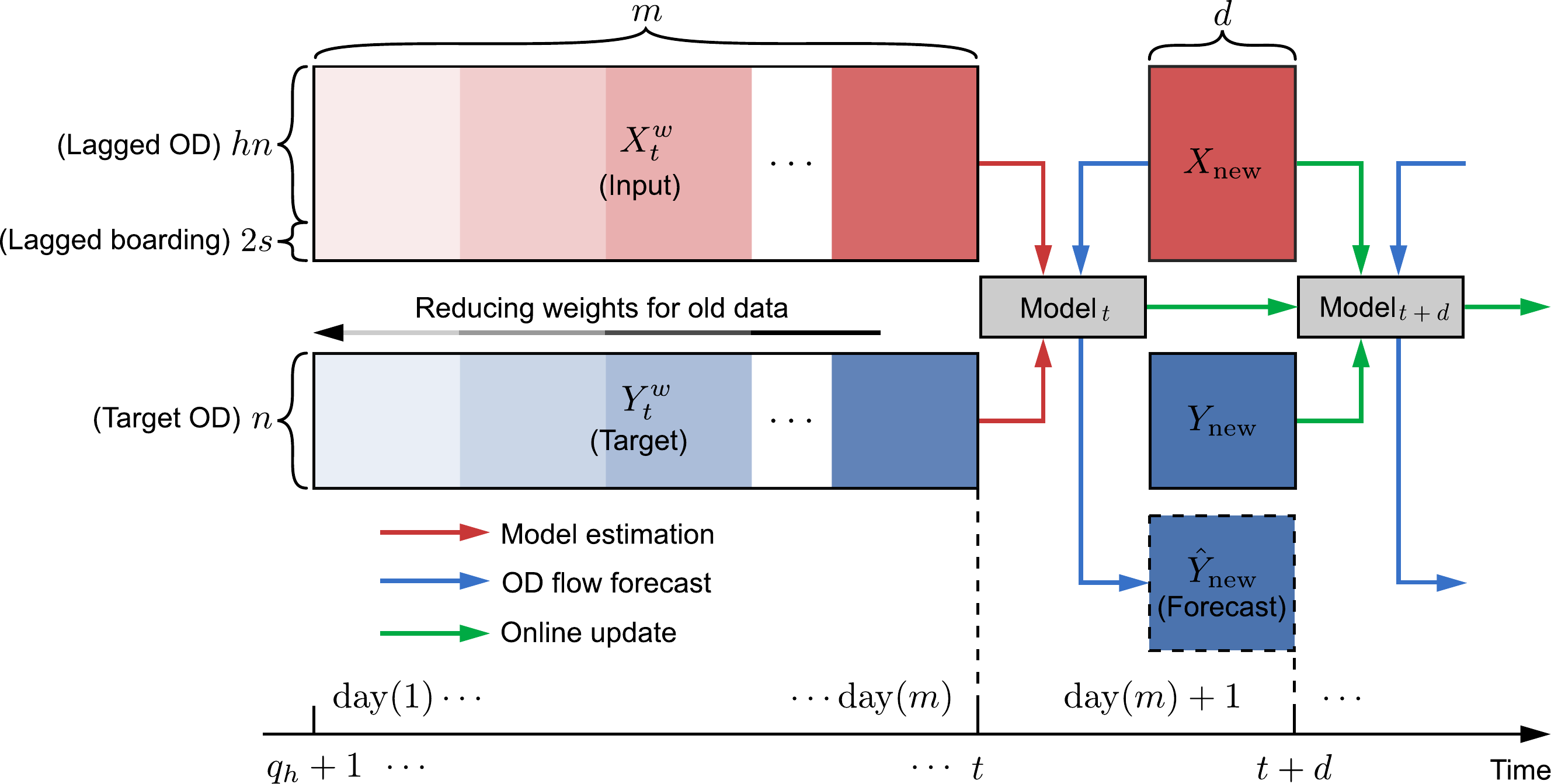}
\caption{Model framework for HW-DMD. Model input $X$ contains $hn$ rows for lagged OD snapshots and $2s$ rows for lagged boarding snapshots. Columns in $Y$ and $\hat{Y}$ are, respectively, real and forecasted snapshots for OD flow. Model coefficients are estimated by weighted historical data ($X_t^w$ and $Y_t^w$) and updated daily whenever new data come.}\label{fig:framework}
\end{center}
\end{figure}

Figure~\ref{fig:framework} summarizes the overall structure of the proposed higher-order weighted DMD (HW-DMD) framework. The underlying forecasting model is a high-order vector autoregression with the boarding flow as extra inputs. A forgetting ratio is introduced to decrease the weights of past data exponentially on a daily basis. In Section~\ref{sec:estimation}, we will introduce a dimensionality reduction technique based on DMD to find a low-rank solution for this large model (w.r.t. number of parameters). Instead of full matrices $A_{t, (\cdot)}$, we seek $\tilde{A}_{t, (\cdot)}$---much smaller matrices---to capture the system's dynamic. Finally, an online update method is proposed in Section~\ref{sec:online} to update the model coefficients incrementally without storing historical data. This provides a memory-saving solution that maintains an up-to-date model.
Note that the same model framework can be easily extended to incorporate higher-order boarding flow or other external covariates (e.g., days of the week, alighting flow, holidays). For example, we can represent days of the week by one-hot encoding $\mathbf{w}_i\in \mathbb{R}^{7\times 1}$ and add an additional regression term $A_{t,w} \mathbf{w}_i$ to Eq.\eqref{eq:model} to incorporate the weekly pattern. This paper only presents the model specified in Eq.~\eqref{eq:model} for illustration.

\subsection{Model estimation}\label{sec:estimation}
We prefer a low-rank approximation of $G_t$ over a full matrix of the optimal solution of Eq.~\eqref{eq:weighted loss2}. This is because storing the large full matrix is prohibitive, and the optimal solution often leads to overfitting problems, especially for the sparse and noisy OD data. Luckily, we can find a pretty good approximation thanks to the inherent low-rank nature of OD data.

Similar to the exact DMD, we first compute the truncated SVD on the weighted augmented data matrix $X^w_t \approx U_{X} \Sigma_{X} V_{X}^{\top}$, where we keep the $r_{X}$ ($r_{X} \ll m$) largest singular values and $U_{X}\in \mathbb{R}^{(hn+2s) \times r_{X}}$, $\Sigma_{X}\in \mathbb{R}^{r_{X} \times r_{X}}$, $V_{X}\in \mathbb{R}^{m \times r_{X}}$. As shown in Figure~\ref{fig:singular value}, a few leading singular values can well capture the entire data. Therefore, an approximation for coefficient matrices is
\begin{equation}
    G_t = Y_{t}^{w}X_{t}^{w+} \approx Y^{w}_{t} V_{X} \Sigma_{X}^{-1} U_{X}^{\top}, \label{eq:Gt}
\end{equation}
\begin{multline}
    [A_{t,1}, \cdots \ A_{t,h}, \ A_{t, b1}, \ A_{t, b2}] \approx  \\
    [Y^{w}_{t} V_{X} \Sigma_{X}^{-1} U_{X,1}^{\top}, \cdots, Y^{w}_{t} V_{X} \Sigma_{X}^{-1} U_{X,h}^{\top},\ Y^{w}_{t} V_{X} \Sigma_{X}^{-1} U_{X,b1}^{\top},\ Y^{w}_{t} V_{X} \Sigma_{X}^{-1} U_{X,b2}^{\top}], \label{eq:A}
\end{multline}
where $U_{X}^{\top} = [U_{X,1}^{\top}, \cdots, U_{X,h}^{\top}, \ U_{X,b1}^{\top}, \ U_{X,b2}^{\top}]$, $U_{X,1}, \cdots, U_{X,h} \in \mathbb{R}^{n \times r_X}$, and $U_{X,b1}, U_{X,b2} \in \mathbb{R}^{s \times r_X}$. This step uses the result from a truncated SVD to replace the original $X_t^w$, which reduces the impact of the noise in the data.

The results computed from Eq.~\eqref{eq:Gt} and Eq.~\eqref{eq:A} are still prohibitive. Therefore, for each column in $Y_{t}^{w}$, we seek a transformation $\mathbf{y}_i^w \rightarrow \tilde{\mathbf{y}}_i^w$ such that $\tilde{\mathbf{y}}^w_{i} \in \mathbb{R}^{r_{Y}}$ with $r_{Y} \ll n$. In doing so, we compute another rank-$r_Y$ truncated SVD of the target matrix $Y^{w}_{t} \approx U_{Y} \Sigma_{Y} V_{Y}^{\top}$. The columns of $U_Y$ form an orthonormal basis; thus, the transformation $\tilde{\mathbf{y}}^w_i = U_{Y}^{\top} \mathbf{y}^w_i$ compute the coordinates of $\mathbf{y}^w_i$ on this basis, which compresses $y_i^w$ from $\mathbb{R}^n$ to $\mathbb{R}^{r_Y}$. We can project coefficient matrices onto the same basis $U_Y$ to greatly reduce the dimensionality:
\begin{align}
    \tilde{A}_{t,i} &= U_{Y}^{\top} A_{t,i} U_{Y} \approx  U_{Y}^{\top} Y^{w}_{t} V_{X} \Sigma_{X}^{-1} U_{X,i}^{\top} U_{Y}, \quad \forall i \in \{1,2,\cdots, h\}, \label{eq:coeff_tilde} \\
    \tilde{A}_{t,bj} &= U_{Y}^{\top} A_{t,bj} \approx  U_{Y}^{\top} Y^{w}_{t} V_{X} \Sigma_{X}^{-1} U_{X,bj}^{\top}, \quad \forall j \in \{1,2\}, \label{eq:coeff_tilde_b}
\end{align}
where $\tilde{A}_{t,i} \in \mathbb{R}^{r_Y\times r_Y}$ and $\tilde{A}_{t,bj}\in \mathbb{R}^{r_Y \times s}$. Finally, we can write the model of Eq.~\eqref{eq:var matrix} in the reduced-order subspace
\begin{equation*}
    \tilde{Y}_{t} \approx \tilde{A}_{t,1}\tilde{Y}_{t - q_1} + \tilde{A}_{t,2}\tilde{Y}_{t-q_2} + \cdots + \tilde{A}_{t,h}\tilde{Y}_{t-q_h} + \tilde{A}_{t,b1}B_{t-1} + \tilde{A}_{t,b1}B_{t-2},
\end{equation*}
where $\tilde{Y}_i = U_Y^{\top} Y_{i}$. The final forecast of an OD snapshot $\hat{\mathbf{y}}_i$ can be calculated by transforming back to the original basis by $\hat{\mathbf{y}}_i = U_Y \tilde{\mathbf{y}}_i$. With the reduced coefficient matrices $\tilde{A}_{t,(\cdot)}$ and projection bases $U_Y$, we avoid calculating and storing the giant coefficient matrices $A_{t,(\cdot)}$.

DMD-based estimation is different from common dimensionality reduction techniques in several ways. For many matrix-factorization-based models and dynamic factor models, a forecasting model is estimated after performing dimensionality reduction \citep[e.g.,][]{ren2017efficient}, or latent factors are constructed by keeping the most temporal dynamics \citep[e.g.,][]{forni2000generalized, lam2011estimation, yu2016temporal}; the forecast ability is designed on the latent (size-reduced) data for these models. In contrast, DMD-based methods first estimate a forecasting model by a least-square fit of rank-reduced full-size data (i.e., Eq.~\eqref{eq:Gt}), next reduce the dimensionality of the linear operator by projecting to leading SVD modes (i.e., Eq.~\eqref{eq:coeff_tilde}--\eqref{eq:coeff_tilde_b}); the resulting linear operator captures the dynamics of the rank-reduced full-sized data. Although the forecast value $\hat{\mathbf{y}}_i$ by an HW-DMD is restricted on the column space of $U_Y$, it is already the best approximation in $\mathbb{R}^{r_Y}$ (in terms of Frobenius norm \citep{eckart1936approximation}) because the basis is determined by leading singular vectors. Besides, the rank truncation for the data also eases the noise and the overfitting problem. As noted by \citet{schmid2010dynamic}, accurate identification of more than the first couple modes can be difficult on noisy data sets without this truncation step.

The major computational cost in parameter estimation of HW-DMD is the SVD part. Current numerical software can solve large-scale SVD very efficiently. Therefore, estimating the HW-DMD model is very fast. We can further derive the eigenvalues and eigenvectors of coefficient matrices $A_{t,i}$ \citep{proctor2016dynamic}. But this step is not necessary for our task, since they are not used to generate the forecast and there is no clear physical meaning for eigenvectors in a high-order vector autoregression.

\subsection{Online update}\label{sec:online}
A model trained by dated data may not reflect the recent dynamic in a system. Instead of retraining using entire data, we develop an online algorithm that updates HW-DMD day by day with new observations without storing historical data, as shown in Figure~\ref{fig:framework}. Similar algorithms for online DMD have been developed by \citet{hemati2014dynamic}, \citet{zhang2019online}, and \citet{alfatlawi2020incremental}. We extend the online DMD update algorithm to a high-order weighted version.

To illustrate the update algorithm, we need to reorganize Eq.~\eqref{eq:Gt}--\eqref{eq:coeff_tilde_b}. Let $\tilde{X}_i^w=U_X^{\top}X_i^w$ and $\tilde{Y}_i^w=U_Y^{\top}Y_i^w$ be the projection of data to the coordinates of $U_X$ and $U_Y$, respectively. Using the fact $(U_X\tilde{X}_t^{w})^{+}=V_{X} \Sigma_{X}^{-1} U_{X}^{\top}$, we can rewrite Eq.~\eqref{eq:Gt} as
\begin{align*}
    G_t  & \approx Y^{w}_{t} (U_X\tilde{X}_t^{w})^{+} \\
               & = Y^w_t \tilde{X}_t^{w\top} \left(\tilde{X}_t^w \tilde{X}_t^{w\top}\right)^{+} U_{X}^{\top}.
\end{align*}
Therefore, Eq.~\eqref{eq:coeff_tilde} and \eqref{eq:coeff_tilde_b} becomes
\begin{align}
    \tilde{A}_{t,i} &\approx \tilde{Y}^w_t \tilde{X}_t^{w\top} \left(\tilde{X}^w_t \tilde{X}_t^{w\top}\right)^{+} U_{X,i}^{\top} U_{Y} = P Q_{X}^{+} U_{X,i}^{\top} U_{Y}\quad \forall i \in \{1, \cdots,h\},\label{eq:stream Gt} \\
    \tilde{A}_{t,bj} &\approx \tilde{Y}^w_t \tilde{X}_t^{w\top} \left(\tilde{X}^w_t \tilde{X}_t^{w\top}\right)^{+} U_{X,bj}^{\top}= P Q_{X}^{+} U_{X,bj}^{\top}\quad \forall j \in \{1, 2\},
\end{align}
where $P=\tilde{Y}^w_t \tilde{X}_t^{w\top} \in \mathbb{R}^{r_Y \times r_X}$ and $Q_{X} = \tilde{X}^w_t \tilde{X}_t^{w\top} \in \mathbb{R}^{r_X \times r_X}$.

To facilitate the online update, we define an additional matrix $Q_{Y} = \tilde{Y}^w_t \tilde{Y}_t^{w\top}\in \mathbb{R}^{r_Y \times r_Y}$. After the reorganization, model coefficients are represented by three ``core'' matrices $P$, $Q_X$, $Q_Y$ and two projection matrices $U_X$, $U_Y$. Note these matrices are also time-varying. For simplicity, we omit the $t$ subscript and regard they are always ``up-to-date''. Moreover, there are two important properties for the core matrices.
\begin{theorem}\label{p1}
    Given new observations $Y_{\mathrm{new}}\in \mathbb{R}^{n \times d}$ and $X_{\mathrm{new}}\in \mathbb{R}^{(hn+2s)\times d}$ from a new day, where $d$ is the number of snapshots per day. Under the same projection matrices, the new core matrices can be updated by
    \begin{align}
        P &\leftarrow \rho P + \tilde{Y}_{\mathrm{new}}\tilde{X}_{\mathrm{new}}^{\top},\label{eq:update P}\\
        Q_X &\leftarrow \rho Q_X + \tilde{X}_{\mathrm{new}}\tilde{X}_{\mathrm{new}}^{\top},\label{eq:update Q_X}\\
        Q_Y &\leftarrow \rho Q_Y + \tilde{Y}_{\mathrm{new}}\tilde{Y}_{\mathrm{new}}^{\top},\label{eq:update Q_Y}
    \end{align}
    where $\tilde{X}_{\mathrm{new}}=U_X^{\top}X_{\mathrm{new}}$ and $\tilde{Y}_{\mathrm{new}}=U_Y^{\top}Y_{\mathrm{new}}$.
\end{theorem}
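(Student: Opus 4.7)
The plan is to expand both sides of the three update formulas from the definitions of $P$, $Q_X$, $Q_Y$ and the weighting scheme, and then read off the recursion via a simple block-partition calculation.

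First I would make explicit how the weighted data matrices evolve when one new day of data arrives. By the weighting rule in Section~\ref{sec:specification}, the weight of a snapshot depends only on its age in days relative to the most recent snapshot. When the current time advances by one day, every pre-existing column's age increases by $1$, so its weight gets multiplied by $\sigma=\sqrt{\rho}$, while the $d$ columns of the new day all carry weight $1$. Thus after receiving $Y_{\mathrm{new}}$ and $X_{\mathrm{new}}$, the updated weighted data matrices admit the block form
\begin{equation*}
    Y^{w}_{\mathrm{upd}} = \bigl[\,\sigma\, Y^w_t,\ Y_{\mathrm{new}}\,\bigr], \qquad
    X^{w}_{\mathrm{upd}} = \bigl[\,\sigma\, X^w_t,\ X_{\mathrm{new}}\,\bigr].
\end{equation*}
Projecting by the (assumed unchanged) bases $U_Y$ and $U_X$ gives
\begin{equation*}
    \tilde{Y}^{w}_{\mathrm{upd}} = \bigl[\,\sigma\, \tilde{Y}^w_t,\ \tilde{Y}_{\mathrm{new}}\,\bigr], \qquad
    \tilde{X}^{w}_{\mathrm{upd}} = \bigl[\,\sigma\, \tilde{X}^w_t,\ \tilde{X}_{\mathrm{new}}\,\bigr],
\end{equation*}
with $\tilde{Y}_{\mathrm{new}}=U_Y^{\top}Y_{\mathrm{new}}$ and $\tilde{X}_{\mathrm{new}}=U_X^{\top}X_{\mathrm{new}}$.

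Next I would plug these block forms directly into the definitions $P=\tilde{Y}^w_t\tilde{X}_t^{w\top}$, $Q_X=\tilde{X}^w_t\tilde{X}_t^{w\top}$, and $Q_Y=\tilde{Y}^w_t\tilde{Y}_t^{w\top}$. A one-line block multiplication gives, for example,
\begin{equation*}
    P_{\mathrm{upd}} = \bigl[\sigma \tilde{Y}^w_t,\ \tilde{Y}_{\mathrm{new}}\bigr]\bigl[\sigma \tilde{X}^w_t,\ \tilde{X}_{\mathrm{new}}\bigr]^{\top}
    = \sigma^{2}\tilde{Y}^w_t\tilde{X}_t^{w\top} + \tilde{Y}_{\mathrm{new}}\tilde{X}_{\mathrm{new}}^{\top}
    = \rho P + \tilde{Y}_{\mathrm{new}}\tilde{X}_{\mathrm{new}}^{\top},
\end{equation*}
using $\sigma^{2}=\rho$. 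The same calculation applied to $Q_X$ and $Q_Y$ yields Eqs.~\eqref{eq:update Q_X} and~\eqref{eq:update Q_Y}.

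There is no real obstacle beyond bookkeeping; the only subtle point worth flagging explicitly in the proof is that the recursion is exact only under the hypothesis stated in the theorem, namely that the projection matrices $U_X$ and $U_Y$ are held fixed across the update. If the bases were refreshed (e.g., by a new truncated SVD on the enlarged data), the block identity $\tilde{Y}^{w}_{\mathrm{upd}} = [\sigma\tilde{Y}^w_t,\tilde{Y}_{\mathrm{new}}]$ would no longer hold, and the simple $\rho$-decay recursion would fail. I would therefore finish the proof by emphasizing this assumption, which also explains why a companion basis-refresh step is needed in the full online procedure of Section~\ref{sec:online}.
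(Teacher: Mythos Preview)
Your proposal is correct and follows essentially the same argument as the paper: form the updated weighted data matrices as $[\sigma\,\tilde{Y}^w_t,\ \tilde{Y}_{\mathrm{new}}]$ and $[\sigma\,\tilde{X}^w_t,\ \tilde{X}_{\mathrm{new}}]$, expand the block product, and use $\sigma^2=\rho$ to obtain the recursion for $P$, with $Q_X$ and $Q_Y$ handled identically. Your added remarks on why the per-column weights rescale by $\sigma$ and on the necessity of the fixed-basis hypothesis are fine embellishments but do not change the approach.
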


\begin{proof}
    Given new observations $Y_{\mathrm{new}}\in \mathbb{R}^{n \times d}$ and $X_{\mathrm{new}}\in \mathbb{R}^{(hn+2s)\times d}$ from a new day, Under the same projection matrices, the new core matrix $P$ can be computed by
    \begin{align*}
        \tilde{Y}^w_{t+d} \tilde{X}_{t+d}^{w \top}
        &= [\sigma \tilde{Y}^w_{t},\ U_Y^{\top}Y_{\mathrm{new}}][\sigma \tilde{X}^w_{t},\ U_X^{\top} X_{\mathrm{new}}]^{\top}\\
        &= [\sigma \tilde{Y}^w_{t},\ \tilde{Y}_{\mathrm{new}}][\sigma \tilde{X}^w_{t},\ \tilde{X}_{\mathrm{new}}]^{\top}\\
         &=\sigma^2 \tilde{Y}^w_{t} \tilde{X}^{w \top}_{t} + \tilde{Y}_{\mathrm{new}} \tilde{X}_{\mathrm{new}}^{\top}\\
         & = \rho P + \tilde{Y}_{\mathrm{new}} \tilde{X}_{\mathrm{new}}^{\top}.
    \end{align*}
Therefore, $P$ can be updated by $P\leftarrow \rho P + \tilde{Y}_{\mathrm{new}} \tilde{X}_{\mathrm{new}}^{\top}$. Similar proof applies to $Q_X$ and $Q_Y$.
\end{proof}

\begin{theorem}\label{p2} Denote by $\bar{Y}_t^w = U_Y\tilde{Y}_{t}^{w}$ the recovered data from the reduced data. If $\mathbf{v}_i$ is the $i$-th eigenvector of $Q_Y$, then $U_Y \mathbf{v}_i$ is the $i$-th left singular vector of  $\bar{Y}_t^w$. The same property applies to $Q_X$ and $\bar{X}_t^w = U_X\tilde{X}_{t}^{w}$.
\end{theorem}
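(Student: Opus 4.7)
The plan is to exploit the standard characterization of left singular vectors as eigenvectors of $M M^{\top}$, together with the fact that the columns of $U_Y$ are orthonormal (since they come from a truncated SVD of $Y_t^w$, so $U_Y^{\top} U_Y = I_{r_Y}$).

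First I would compute $\bar{Y}_t^w (\bar{Y}_t^w)^{\top}$. Using $\bar{Y}_t^w = U_Y \tilde{Y}_t^w$ and the definition $Q_Y = \tilde{Y}_t^w (\tilde{Y}_t^w)^{\top}$, this factors as $U_Y Q_Y U_Y^{\top}$. Because $Q_Y$ is symmetric positive semidefinite, it admits an eigendecomposition $Q_Y \mathbf{v}_i = \lambda_i \mathbf{v}_i$ with orthonormal eigenvectors and $\lambda_1 \ge \lambda_2 \ge \cdots \ge 0$.

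Next, I would verify directly that $U_Y \mathbf{v}_i$ is an eigenvector of $\bar{Y}_t^w (\bar{Y}_t^w)^{\top}$ with the same eigenvalue: applying $U_Y Q_Y U_Y^{\top}$ to $U_Y \mathbf{v}_i$ and using $U_Y^{\top} U_Y = I$ collapses the middle factor and leaves $U_Y Q_Y \mathbf{v}_i = \lambda_i\, U_Y \mathbf{v}_i$. Since $U_Y$ has orthonormal columns it is an isometry on its column space, so $\|U_Y \mathbf{v}_i\| = \|\mathbf{v}_i\| = 1$ and the family $\{U_Y \mathbf{v}_i\}$ is orthonormal. Because the eigenvalues are identical to those of $Q_Y$, the ordering is preserved, and therefore $U_Y \mathbf{v}_i$ is the $i$-th left singular vector of $\bar{Y}_t^w$ (with singular value $\sqrt{\lambda_i}$). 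The argument for $Q_X$ and $\bar{X}_t^w$ is identical after substituting the subscript $X$ for $Y$.

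There is no real obstacle here; the only subtlety worth flagging is to make clear that $U_Y^{\top} U_Y = I$ (orthonormality of the \emph{columns} of a truncated SVD basis) while $U_Y U_Y^{\top}$ is only an orthogonal projector onto the range of $U_Y$, so the eigenvector equation must be arranged as above rather than through the non-existent inverse of $U_Y$. A brief remark that ordering carries over because the nonzero spectrum is preserved under the orthogonal embedding completes the argument.
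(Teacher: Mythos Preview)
Your argument is correct and rests on exactly the same two ingredients as the paper's proof: the identity $\bar{Y}_t^w (\bar{Y}_t^w)^{\top} = U_Y Q_Y U_Y^{\top}$ and the orthonormality $U_Y^{\top} U_Y = I$. The only cosmetic difference is direction---the paper starts from the SVD $\bar{Y}_t^w = \bar{U}\bar{\Sigma}\bar{V}^{\top}$, deduces that $U_Y^{\top}\bar{U}$ diagonalizes $Q_Y$, and then maps back via $U_Y(U_Y^{\top}\bar{\mathbf{u}}_i)=\bar{\mathbf{u}}_i$, whereas you go straight from an eigenvector of $Q_Y$ to an eigenvector of $\bar{Y}_t^w(\bar{Y}_t^w)^{\top}$---so the approaches are essentially the same, with yours arguably a touch more direct.
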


\begin{proof}
Compute SVD $\bar{Y}_t^w = \bar{U} \bar{\Sigma} \bar{V}^{\top}$, then
\begin{align}
    & \bar{Y}_t^w \bar{Y}_t^{w\top} = \bar{U} \bar{\Sigma} \bar{V}^{\top} \bar{V} \bar{\Sigma}^{\top} \bar{U}^{\top} = \bar{U} \left(\bar{\Sigma}\bar{\Sigma}\right) \bar{U}^{\top},\\
    & \left(\bar{Y}_t^w \bar{Y}_t^{w\top}\right) \bar{U} =\bar{U} \left(\bar{\Sigma}\bar{\Sigma}\right) = \bar{U} \bar{\Lambda}\label{eq:appendix1}.
\end{align}
Therefore, columns of $\bar{U}$ are the eigenvectors of $\bar{Y}_t^w \bar{Y}_t^{w\top}$ and the left singular vectors of $\bar{Y}_t^w$. Substitute $\bar{Y}_t^w \bar{Y}_t^{w\top}=U_Y Q_Y U_Y^{\top}$ to Eq.~\eqref{eq:appendix1}, we have
\begin{align*}
    & \left(U_Y Q_Y U_Y^{\top} \right)\bar{U}= \bar{U} \bar{\Lambda}, \\
    & Q_Y \left(U_Y^{\top} \bar{U}\right)= \left(U_Y^{\top} \bar{U}\right) \bar{\Lambda}.
\end{align*}
Define $V=U_Y^{\top} \bar{U}$. Then, each column $\mathbf{v}_i$ in $V$ is a eigenvector for $Q_Y$ and $U_Y \mathbf{v}_i=U_Y \left(U_Y^{\top} \bar{\mathbf{u}}_i\right) = {\mathbf{u}}_i$ is a singular vector of $\bar{Y}_t^w$.
\end{proof}

Theorem~\ref{p1} is used to update the core matrices in a memory-saving way. Theorem~\ref{p2} indicates we can use the eigenvectors of $Q_Y$ to approximate the left singular vectors of $Y_t^w$ (because $Y_t^w \approx \bar{Y}_t^w$), which is crucial for updating the projection matrices. Based on these properties, we summarize the online update algorithm in the following three steps.
\begin{itemize}
    \item[1.] \textbf{Expand projection matrices}. Let $E_Y = Y_{\mathrm{new}}-U_{Y}U_{Y}^{\top}Y_{\mathrm{new}}$ and $E_X = X_{\mathrm{new}} - U_{X}U_{X}^{\top}X_{\mathrm{new}}$ be the residuals that cannot be represented by the column space of $U_X$ and $U_Y$. To incorporate these residuals, we expand projection matrices by $U_X\leftarrow[U_X, U_{E_X}]$ and $U_Y\leftarrow[U_Y, U_{E_Y}]$, where $U_{E_{X}}$ and $U_{E_{Y}}$ are the orthonormal bases (obtained by SVD or QR factorization) of $E_X$ and $E_Y$, respectively.
    \item[2.] \textbf{Update core matrices}. To align dimensions, we first pad $P$, $Q_X$, and $Q_Y$ with zeros on the dimensions where $U_X$ and $U_Y$ expanded. Then update core matrices by Eq.~\eqref{eq:update P}--\eqref{eq:update Q_Y}.
    \item[3.] \textbf{Compression}. The first two steps incorporate all new information at the cost of expanding dimensions. Next, we compress the model based on Theorem~\ref{p2}. Denote $V_X$ and $V_Y$ to be matrices composed by the leading $r_X$ and $r_Y$ eigenvectors of $Q_X$ and $Q_Y$, respectively. We can compress projection matrices by $U_X\leftarrow U_X V_X$, $U_Y\leftarrow U_Y V_Y$ to keep the leading singular vectors of $\bar{X}_{t+d}^w$ and $\bar{Y}_{t+d}^w$. The core matrices can be compressed accordingly by $Q_X\leftarrow V_X^{\top}Q_X V_X$, $Q_Y\leftarrow V_Y^{\top}Q_Y V_Y$, $P\leftarrow V_Y^{\top}P V_X$.
\end{itemize}

Besides the daily update, a more general setting can be updating the model for every $k$ intervals or only doing the compression step when $r_X$ or $r_Y$ exceeds a threshold. This paper adopts the daily update described above because metro systems often have a one-day periodicity. In terms of computational efficiency, the online update algorithm computes the SVD for $d$-column data matrices and eigenvalue decomposition of $Q_X$ and $Q_Y$. The computation has a constant cost every day and it is significantly faster than retraining using entire data. In terms of memory efficiency, historical data are not required when updating the model. All we need to store are three ``core'' matrices and two projection matrices. Regarding the error, the online algorithm does not take into account the previously truncated part. This impact is negligible because the truncated part contains mostly noise, and past data are forgotten exponentially. Our experiments in section~\ref{sec:effect_online} show the online algorithm performs pretty close to or even slightly better than retraining.

\subsection{Connections with other DMD models}
The proposed HW-DMD is closely related to Hankel-DMD \citep{brunton2017chaos, arbabi2017ergodic, avila2020data} and DMD with control \citep[DMDc,][]{proctor2016dynamic}. Hankel-DMD uses Hankel data matrices as input and output to model a non-linear dynamical system by a linear model; its DMD modes approximate to the Koopman modes. There is another model also named Higher Order DMD \citep[HODMD,][]{le2017higher}, which requires Hankelizing data in its estimation and is essentially similar to Hankel-DMD. Instead, the proposed HW-DMD uses raw snapshots as the output (the left side of Eq.~\eqref{eq:var matrix}) without using the Hankel structure. This formula is equivalent to a high-order vector autoregression model, which is neater and more suitable in the context of forecasting. Moreover, our model can use non-continuous orders and external variables (e.g., the boarding flow). Essentially, the external variables of our model can be regarded as the control term of a DMDc model.

The three-step online update algorithm for HW-DMD in this paper inherits from the work of \citet{hemati2014dynamic}. The original algorithm was developed for the exact DMD introduced in Section~\ref{sec:DMD}. Besides, the online DMD proposed by \citep{zhang2019online} considers the decaying weight of data, but the constant projection matrix in their assumption restricts the update effect. \citep{alfatlawi2020incremental} proposed an online algorithm for weighted DMD using incremental SVD, which is a different technique from our method. Our contribution is extending the algorithm proposed by \citet{hemati2014dynamic} to a high-order weighted version with the consideration of external regression covariates.

\section{Experiments}\label{sec:experiments}
In this section, we compare the proposed HW-DMD with other forecasting models using real-world data. We begin with an introduction to data and experimental settings. Next, we compare model performances by forecasting the OD matrices and the boarding flow derived from the OD matrices. Finally, we examine the long-term effect of the online HW-DMD update algorithm. The code for experiments is available from \url{https://github.com/mcgill-smart-transport/high-order-weighted-DMD}.

\subsection{Data and experimental settings}
We examine HW-DMD using the metro smart card data from two cities, Guangzhou and Hangzhou. Both data sets record the origin, destination, and entry and exit time of each metro trip. We focus on the forecast of workdays and connect each Friday to the next Monday. Details of the two data sets are as follows:
\begin{itemize}
    \item Guangzhou metro data: This data set covers around 301 million trips among 159 metro stations in Guangzhou from July 1st to Sept 30th, 2017. Guangzhou metro operates from 6:00 to 24:00. We use the first twenty weekdays (July 3rd to July 28th) as the training set, the following ten weekdays (July 31st to Aug 11th) as the validation set, and the following ten weekdays (Aug 14th to Aug 25th) as the test set. There are additional one-month data after the test set; we use these data to study the long-term effect of the online HW-DMD update algorithm.
    \item Hangzhou metro data\footnote{\url{https://doi.org/10.5281/zenodo.3145404}}: This is an open data set that covers 80 effective stations of Hangzhou metro from Jan 1st to Jan 25th, 2019. The operation hours are from 5:30 to 23:30. We use the first ten weekdays (Jan 1 to Jan 14) for training, the following four weekdays (Jan 15 to Jan 18) for validation, and the rest five weekdays (Jan 21 to Jan 25) for testing.
\end{itemize}

We aggregate OD snapshots by a 30-minute time interval, which means 36 snapshots per day for both cities. Note that a small interval may result in sparse OD matrices; we choose the 30-minute interval to balance the practical requirements. Figure~\ref{fig:histogram} shows the distribution of $o_{i,j}$ from an OD snapshot of a typical morning peak in Guangzhou. The distribution roughly follows a power law, with most OD pairs having small volumes while a few of them are significantly larger. The highly skewed distribution is very difficult to be properly handled by conventional forecasting models.

\begin{figure}
\begin{center}
\includegraphics[]{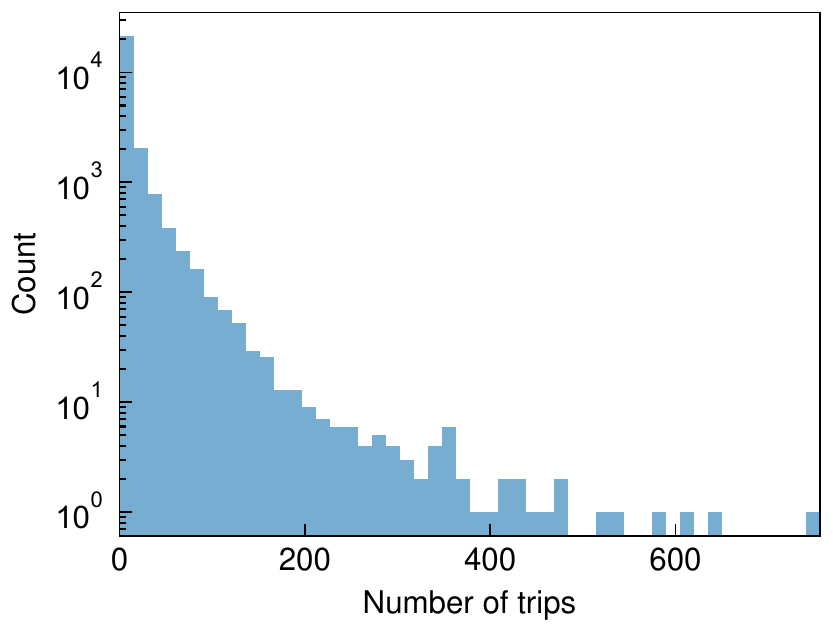}
\caption{The histogram of $o_{i,j}$ in an OD snapshot of a morning peak in Guangzhou.}
\label{fig:histogram}
\end{center}
\end{figure}

The performance of a model is quantified using the root-mean-square error (RMSE), the weighted mean absolute percentage error (WMAPE), and the coefficient of determination (denoted as $R^2$):
\begin{equation*}\label{eq:rmse}
    \text{RMSE}(\alpha, \hat{\alpha}) = \sqrt{\frac{1}{N} \sum_{i=1}^{N}{(\alpha_{i} - \hat{\alpha}_{i})^2}},
\end{equation*}
\begin{equation*}\label{eq:wmape}
    \text{WMAPE}(\alpha, \hat{\alpha}) =
    \frac{\sum_{i}^{N}\left| \alpha_{i} - \hat{\alpha}_{i}\right|}
    {\sum_{i}^{N} \left| \alpha_{i}\right|} \times 100\%,
\end{equation*}
\begin{equation*}\label{eq:r2}
    R^{2}(\alpha, \hat{\alpha})=1-\frac{\sum_{i=1}^{N}\left(\alpha_{i}-\hat{\alpha}_{i}\right)^{2}}{\sum_{i=1}^{N}\left(\alpha_{i}-\bar{\alpha}\right)^{2}},
\end{equation*}
where $\alpha$ and $\hat{\alpha}$ are, respectively, the real and predicted values; $\bar{\alpha}$ is the average value of $\alpha$; $N$ is the total number of elements under different time intervals and locations. The three performance metrics are computed for both OD flow $o$ and boarding flow $b$ (forecasted by $\hat{b}_{t,i}=\sum_{j}{\hat{o}_{t,i,j}}$).

\subsection{Hyperparameters}\label{sec:hyperparameters}
We use the online update algorithm for HW-DMD if not otherwise specified. Hyperparameters for HW-DMD include time lags $q_1, \cdots, q_h$, the SVD truncation rank $r_X, r_Y$, and the forgetting ratio $\rho$. These parameters are determined in a sequential order.

We use the Guangzhou data set as an example to elaborate the hyperparameter tuning procedure. We first set $r_X=r_Y=100$ and $\rho=1$ and select time lags in a greedy manner. For time lags within one day ($3\leq q_i \leq36$), we repeatedly add a ``currently best'' time lag based on the RMSE of the validation set until a new lag brings no improvement or the number of lags reaches ten. This procedure selects \{3, 4, 8, 14, 19, 28, 30, 33, 35, 36\} as time lags. The considerable high-order time lags in the result indicate long-term auto-correlations of OD time series. For example, the lag 19 roughly equals a typical work duration (9.5 hours), which can be explained as a strong correlation between the departure trips for commuters in the morning and the returning trips in the afternoon \citep{cheng2021incorporating}. The metro OD flow is also highly regular; the largest several lags (e.g., 33, 35, and 36) capture the one-day periodicity. Next, we determine $r_X$ and $r_Y$ by a grid search from 20 to 100 at an interval of 10. The best $r_Y$ is 50. A larger $r_X$ than 100 still brings a marginal improvement, but we truncate $r_X$ at 100 to restrict the model size ($r_X$ affects the size of $U_X$ in the online update). Lastly, we set $\rho$ to be $0.92$ based on a line search from $0.8$ to $1$. As shown in Figure~\ref{fig:rho}, we can see assigning smaller weights for old data indeed improves the forecast. Because $0.92^{8}\approx 0.51$, using $\rho=0.92$ is roughly equivalent to halving the weight every eight days.

\begin{figure}
\begin{center}
\includegraphics[]{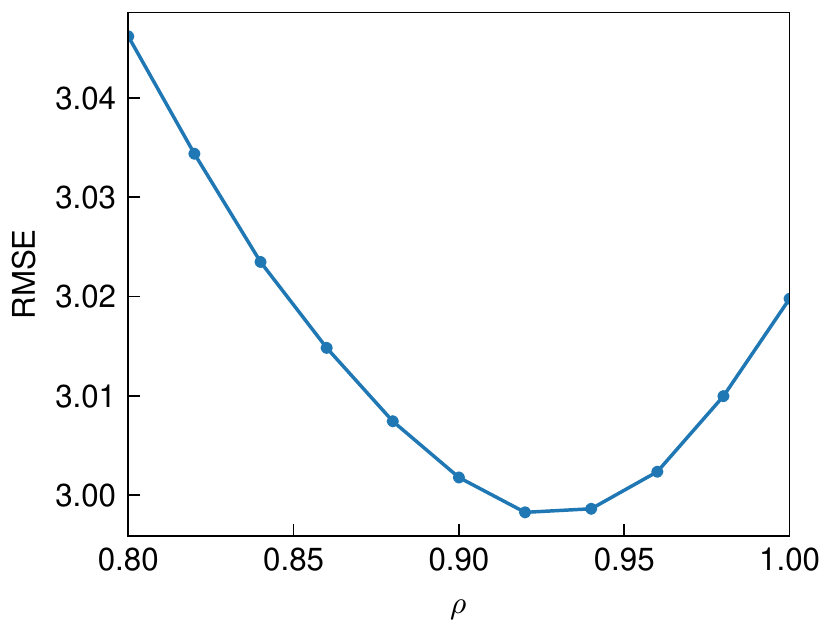}
\caption{The effect of $\rho$ to the forecast OD RMSE in the validation set of the Guangzhou data.}\label{fig:rho}
\end{center}
\end{figure}

The hyperparameter tuning for the Hangzhou data set follows the same procedure. The selected hyperparameters for the Hangzhou data set are time lags=\{3, 4, 6, 14, 18, 19, 28, 32, 35, 36\}, $r_Y=40$, $r_X=100$, and $\rho=0.92$.

\subsection{Benchmark models}
We compare HW-DMD with the following benchmark models:
\begin{itemize}
    \item HA: Historical Average. For the OD flow at a certain period (e.g., 7:00--7:30) of the day, HA uses the average OD flow at that period in the training set as the forecast value.

    \item TRMF: Temporal Regularized Matrix Factorization \citep{yu2016temporal}. TRMF is a matrix factorization model that imposes autoregression (AR) processes on each temporal factor. We use time lags $[1, \cdots ,10]$ for the AR processes. We search over \{100, 300, 500, 1000, 1500, 2000, 2500, 3000\} for the best regularization parameter and search from 30 to 150 with an interval of 10 for the best number of factors.

    \item ConvLSTM: Convolutional LSTM \citep{shi2015convolutional}. It is a deep recurrent neural network model that forecasts future frames of matrix time series (e.g., videos). Here we use it to forecast future OD matrices by the most recent ten OD matrices. Following the work by \cite{zhang2021short}, we apply a three-layer LSTM structure with eight, eight, and one filter, respectively, and set the kernel size to be $3\times3$ for all convolutional layers in the model.

    \item FNN: A two-layer Feedforward Neural Network. We use the OD snapshots of 3-10 lags ago and the boarding flow snapshot of 1-2 lags ago as the input features. We perform a grid search over the type of activation functions (linear, sigmoid, and relu) and the number of hidden layers (from 10 to 100 at an interval of 10) for the best model setting.

    \item SARIMA: Seasonal AutoRegressive Integrated Moving Average. We only use SARI-MA to forecast the boarding flow since SARIMA only handles one-dimensional time series. We use the order $\text{ARIMA}(2, 0, 1)(1, 1, 0)[36]$ for all the stations and fit 159 separate models. This model configuration is the same as \cite{cheng2021incorporating} and was tested to be suitable for most metro stations.
\end{itemize}

Applying TRMF, ConvLSTM, and FNN to the original data (or after a normalization) can hardly obtain a forecast better than HA. This phenomenon was also found by \cite{gong2018network, gong2020online}. This is because the OD data are high-dimensional, sparse, noisy, and highly skewed. To improve the forecast of these models, we apply TRMF, ConvLSTM, and FNN to the residuals after subtracting the HA from the original data.  This ``mean-removal'' processing also weakens the data's periodicity; therefore, we do not use seasonal lags in these models. Besides, because the standard TRMF and ConvLSTM cannot use the boarding flow as extra inputs, we ignore the delayed data availability problem for these models and assume all historical OD snapshots are available.

\subsection{Forecast result}
We apply trained models to the test set and forecast OD matrices of the next three steps at each time interval. Note OD snapshots of 1-2 lags ago are unknown; they are replaced by previously forecasted OD snapshots when doing multi-step rolling forecasting by HW-DMD/FNN. Next, the boarding flow can be calculated from OD matrices. We compare the forecast accuracy of models in terms of OD flow and boarding flow.

Table~\ref{tab:result_od} shows the results of OD flow forecast. We can see HW-DMD with a forgetting ratio $\rho=0.92$ outperforms other models in all evaluation metrics. Even the three-step forecast of HW-DMD is better than the one-step forecast of other models. The advantage of HW-DMD over other models is more significant in the Hangzhou data set. Although TRMF, FNN, and ConvLSTM are trained on the residuals after subtracting the HA from the original data, the improvement of these models compared with HA is limited. In contrast, HW-DMD is directly applied to the original data but provides a significantly better forecast, demonstrating its strong prediction power in handling the sparse, noisy, and high-dimensional OD data. Besides, the performance of the ``unweighted'' HW-DMD ($\rho=1$) is slightly behind the weighted version, but still better than other models.

\begin{table}[!htbp]
    \centering\small
    \caption{Models' performance for OD flow forecasting.}
    \label{tab:result_od}
    \begin{tabular}{ccccc|ccc}
    \toprule
    \multirowcell{2}{Method} &  \multirowcell{2}{Criterion}  & \multicolumn{3}{c|}{Guangzhou} & \multicolumn{3}{c}{Hangzhou}\\
        &       & One-step   & Two-step   & Three-step & One-step   & Two-step   & Three-step \\
    \hline
    \multirowcell{3}{HW-DMD \\ $\rho=0.92$} & RMSE  & \textbf{3.05}  & \textbf{3.09}  & \textbf{3.11}  & \textbf{3.36}  & \textbf{3.41}  & \textbf{3.44} \\
        & WMAPE & \textbf{29.65\%} & \textbf{29.77\%} & \textbf{29.79\%} & \textbf{31.76\%} & \textbf{31.96\%} & \textbf{31.84\%} \\
        & $R^2$   & \textbf{0.957} & \textbf{0.956} & \textbf{0.955} & \textbf{0.934} & \textbf{0.932} & \textbf{0.931} \\
    \hline
    \multirowcell{3}{HW-DMD \\ $\rho=1$} & RMSE  & 3.08  & 3.12  & 3.14  & 3.40  & 3.45  & 3.48 \\
        & WMAPE & 29.71\% & 29.87\% & 29.91\% & 31.94\% & 32.22\% & 32.13\% \\
        & $R^2$   & 0.956 & 0.955 & 0.954 & 0.933 & 0.930 & 0.929 \\
    \hline
    \multirow{3}[2]{*}{TRMF} & RMSE  & 3.22  & 3.24  & 3.26  & 3.80  & 3.89  & 3.96\\
        & WMAPE & 30.61\% & 30.72\% & 30.79\% & 34.02\% & 34.48\% & 34.82\% \\
        & $R^2$   & 0.952 & 0.951 & 0.951 & 0.916 & 0.912 & 0.908 \\
    \hline
    \multirow{3}[2]{*}{FNN} & RMSE  & 3.15  & 3.16  & 3.18  & 3.97  & 4.01  & 4.05\\
        & WMAPE & 30.23\% & 30.28\% & 30.32\% & 33.58\% & 33.63\% & 33.65\% \\
        & $R^2$   & 0.954 & 0.953 & 0.953 & 0.908 & 0.906 & 0.904 \\
    \hline
    \multirow{3}[2]{*}{Conv-LSTM} & RMSE  & 3.25  & 3.26  & 3.27  & 4.04  & 4.06  & 4.08\\
        & WMAPE & 30.11\% & 30.18\% & 30.23\% & 32.96\% & 32.92\% & 33.04\% \\
        & $R^2$   & 0.951 & 0.950 & 0.950 & 0.905 & 0.904 & 0.903 \\
    \hline
    \multirow{3}[2]{*}{HA} & RMSE  & 3.43  & 3.43  & 3.43  & 4.34  & 4.34  & 4.34\\
        & WMAPE & 31.21\% & 31.21\% & 31.21\% & 34.28\% & 34.28\% & 34.28\% \\
        & $R^2$   & 0.945 & 0.945 & 0.945 & 0.890 & 0.890 & 0.890  \\
    \bottomrule
    \end{tabular}
\end{table}%

\begin{table}[!htbp]
    \centering\small
    \caption{Models' performance for boarding flow forecasting.}\label{tab:result_boarding}
    \begin{tabular}{ccccc|ccc}
    \toprule
    \multirowcell{2}{Method} &  \multirowcell{2}{Criterion}  & \multicolumn{3}{c|}{Guangzhou} & \multicolumn{3}{c}{Hangzhou}\\
            &       & One-step   & Two-step   & Three-step & One-step   & Two-step   & Three-step \\
    \hline
    \multirowcell{3}{HW-DMD \\ $\rho=0.92$} & RMSE  & \textbf{93.99} & 102.61 & 107.58 & \textbf{50.08} & \textbf{54.14} & \textbf{56.32}\\
            & WMAPE & \textbf{6.09\%} & \textbf{6.68\%} & 6.98\% & \textbf{7.38\%} & \textbf{8.05\%} & \textbf{8.12\%} \\
            & $R^2$   & \textbf{0.991} & \textbf{0.989} & 0.988 & \textbf{0.989} & \textbf{0.988} & \textbf{0.987} \\
    \hline
    \multirowcell{3}{HW-DMD \\ $\rho=1$} & RMSE  & 94.51 & \textbf{102.46} & 106.55 & 51.28 & 55.66 & 58.45\\
            & WMAPE & 6.18\% & 6.74\% & 6.98\% & 7.54\% & 8.29\% & 8.43\% \\
            & $R^2$   & 0.991 & 0.989 & 0.988 & 0.989 & 0.987 & 0.986 \\
    \hline
    \multirow{3}[2]{*}{TRMF} & RMSE  & 126.03 & 127.87 & 128.65 & 77.70 & 81.19 & 83.12 \\
            & WMAPE & 7.92\% & 8.07\% & 8.13\% & 10.00\% & 10.55\% & 10.81\% \\
            & $R^2$   & 0.983 & 0.983 & 0.983 & 0.975 & 0.972 & 0.971 \\
    \hline
    \multirow{3}[2]{*}{FNN} & RMSE  & 101.93 & 104.00 & \textbf{106.06} & 67.16 & 68.83 & 70.77 \\
            & WMAPE & 6.44\% & 6.58\% & \textbf{6.69}\% & 9.00\% & 9.22\% & 9.50\% \\
            & $R^2$   & 0.989 & 0.989 & \textbf{0.988} & 0.981 & 0.980 & 0.979 \\
    \hline
    \multirow{3}[2]{*}{Conv-LSTM} & RMSE  & 117.16 & 121.22 & 123.40 & 71.46 & 75.75 & 78.07 \\
            & WMAPE & 6.87\% & 7.19\% & 7.35\% & 8.83\% & 9.63\% & 9.98\% \\
            & $R^2$   & 0.985 & 0.984 & 0.984 & 0.978 & 0.976 & 0.974 \\
    \hline
    \multirow{3}[2]{*}{HA} & RMSE  & 136.56 & 136.56 & 136.56 & 88.25 & 88.25 & 88.25 \\
            & WMAPE & 8.38\% & 8.38\% & 8.38\% & 11.09\% & 11.09\% & 11.09\% \\
            & $R^2$   & 0.980 & 0.980 & 0.980 & 0.967 & 0.967 & 0.967 \\
    \hline
    \multirow{3}[2]{*}{SARIMA} & RMSE  & 110.23 & 120.60 & 126.52 & 55.59 & 60.97 & 64.66 \\
            & WMAPE & 7.15\% & 7.65\% & 7.93\% & 7.86\% & 8.28\% & 8.50\% \\
            & $R^2$   & 0.987 & 0.985 & 0.983 & 0.987 & 0.984 & 0.982 \\
    \bottomrule
    \end{tabular}%
\end{table}%

Examining the aggregated boarding flow is important because it reflects if the forecast errors in OD matrices' are properly distributed, which is crucial when using OD matrices in traffic assignments. Moreover, the boarding flow itself is of interest to many applications. Table~\ref{tab:result_boarding} shows the boarding flow forecasting; all models except SARIMA calculate boarding flow by OD matrices. The two HW-DMD models are the best models in most cases. The only exception is that FNN slightly outperforms HW-DMD for the three-step forecast of the Guangzhou data set. Importantly, HW-DMD is the only model that outperforms SARIMA, a well-established boarding flow forecasting model, in both data sets, showing that the forecast of HW-DMD accurately reflects the marginal distribution of OD matrices.

The magnitude of OD flow in a metro system varies significantly in time and space dimensions. Therefore, we further compare HW-DMD with other models under different scenarios. Figure~\ref{fig:error_distribution} (a) and (c) show the forecast RMSE at different times of a day. We can see the RMSE of HW-DMD is the smallest in most time slots, particularly for the Hangzhou data set. Other models, such as Conv-LSTM, perform slightly better in the early morning and late night, but the difference is close, and the total network OD flow of these periods is pretty small. Figure~\ref{fig:error_distribution} (b) and (d) show the forecast RMSE for OD pairs with different flow magnitudes. The forecast RMSE of HW-DMD is considerably lower than other models for high-flow OD pairs (average half-hour OD flow larger than $2^4$). Note the number of OD pairs drops exponentially with the increase of OD flow, showing the superior forecast capability of HW-DMD for highly skewed data.

\begin{figure}[ht]
\begin{center}
\includegraphics[]{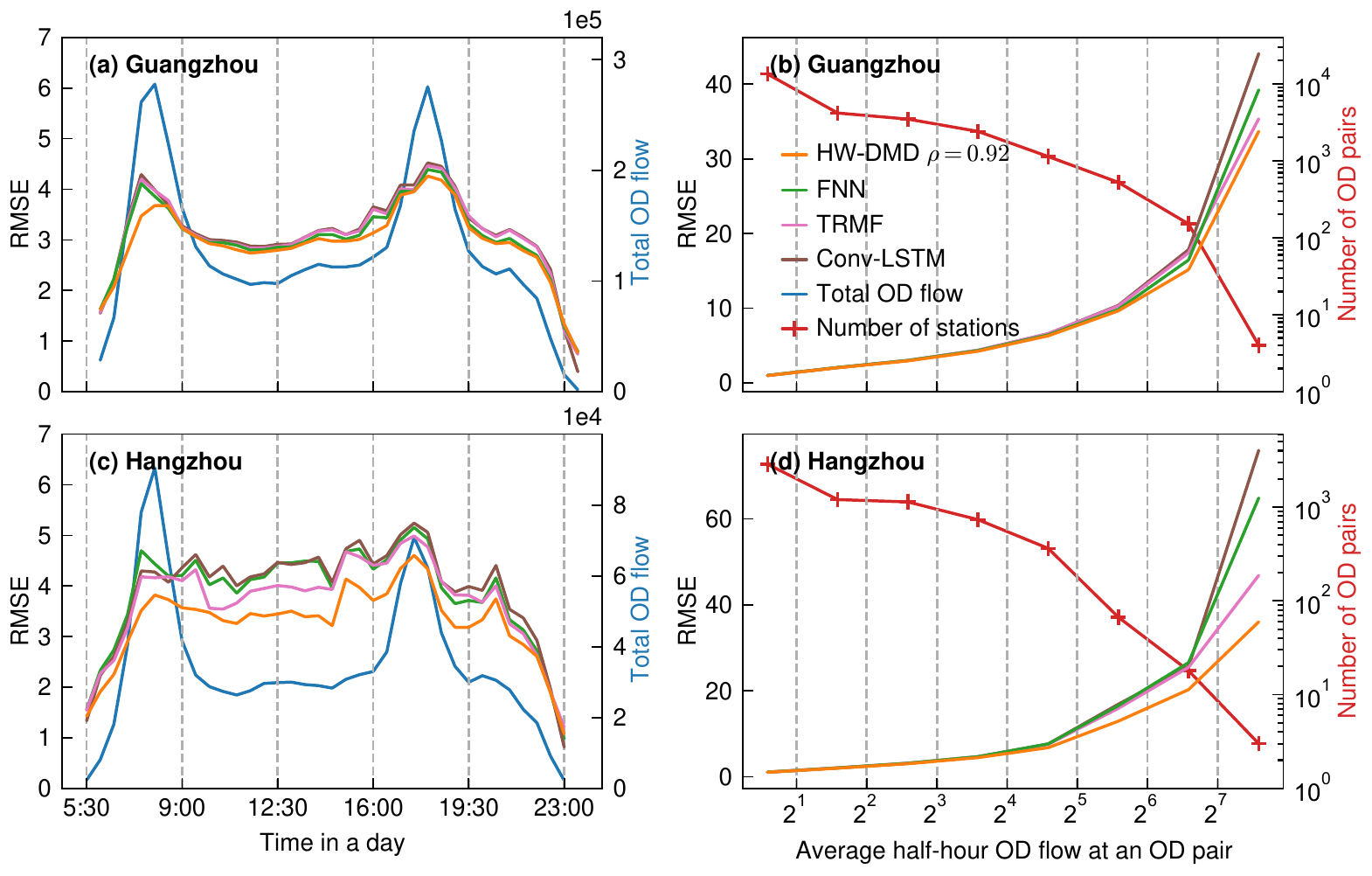}
\caption{The RMSE of OD flow forecasting at different times and different OD pairs. (a) and (c) shows the RMSE of OD matrix forecasting at every 30-minute interval, along with the total OD flow in the network. Using $2^i$ as boundaries, we divide OD pairs into groups according to their average half-hour OD flow; the forecast RMSE at each group and the number of OD pairs of each group are shown in (b) and (d).}\label{fig:error_distribution}
\end{center}
\end{figure}


Finally, we show the real and one-step forecast of OD flow at four representative OD pairs of Guangzhou metro in  Figure~\ref{fig:forecast}. The OD flow exhibits a clear daily periodicity, explaining why HA already works reasonably well. Compared with FNN, HW-DMD is better at forecasting the fluctuation of high-flow OD pairs, as shown in Figure~\ref{fig:forecast} (a) and (b). In Figure~\ref{fig:forecast} (a), the forecast of HW-DMD is often lower than the real value; this is hard to avoid since there is a two-lag delay when collecting the real OD flow. More OD pairs in the system are like Figure~\ref{fig:forecast} (c) and (d) with a low flow but high noise. Under such high volatility, the forecast by HW-DMD reflects a smooth average value. In fact, the performances of other models are often undermined by noise. The SVD truncation to the data greatly enhances HW-DMD's ability in handling the noise data (Figure~\ref{fig:singular value}). Overall, HW-DMD achieves a great balance between forecasting and noise reduction, which is particularly hard for such a high-dimensional system with diverse flow magnitudes.

\begin{figure}[ht]
\begin{center}
\includegraphics[]{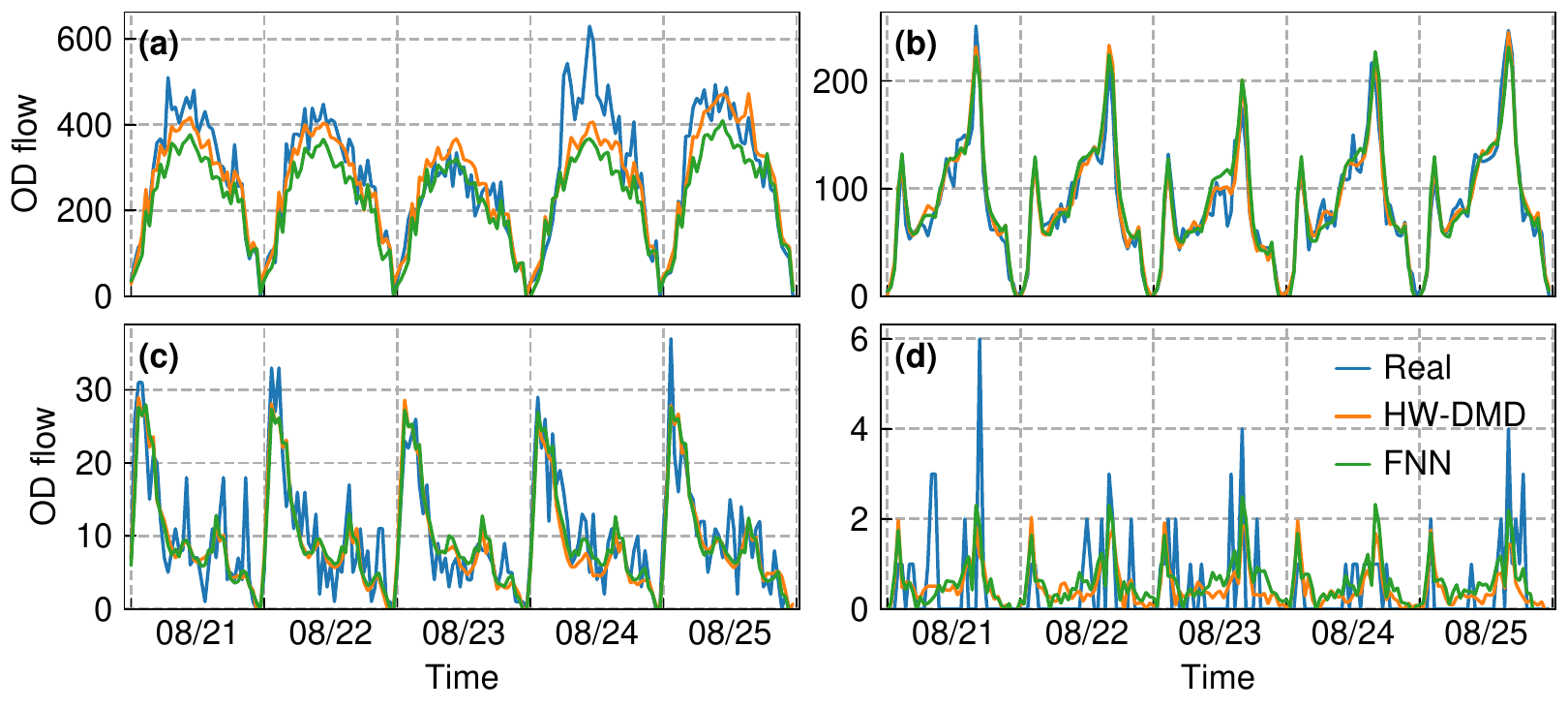}
\caption{The real and forecasted time series of four selected OD pairs of Guangzhou metro. (a) is the busiest OD pair in the Guangzhou metro data set. (a) to (d) are in a flow decreasing order.}\label{fig:forecast}
\end{center}
\end{figure}

\subsection{Effect of the low-rank assumption}\label{sec:impact_rank}
The demands of majority OD pairs are small and sparse by nature, making it difficult for a forecasting model to distinguish random fluctuation (noise) and intrinsic dynamic patterns. Taking the OD pair shown in Figure~\ref{fig:forecast} (d) as an example, the randomness in this OD pair is quite large compared with its average flow (low signal-to-noise ratio). A good forecasting should be robust to the noise while maintaining accurate cumulative effects of OD pairs in total (e.g., the boarding flow). This section evaluates the impact of using the low-rank assumption on forecasting and noise filtering.

According to Section~\ref{sec:estimation}, the forecast of HW-DMD is always on the column space of $U_Y$. Therefore, the best possible value of an OD snapshot $\hat{\mathbf{y}}_i$ calculated by HW-DMD is the rank-reduced full-size data, i.e., $U_Y U_Y^{\top}\mathbf{y}_i$, which is the upper bound of an HW-DMD's forecast ability. Figure~\ref{fig:low_rank_approx} shows how well this low-rank approximation fits the original data. We can see the low-rank approximation keeps most information for the high-demand OD pair of Figure~\ref{fig:low_rank_approx} (a). In contrast, most fluctuations in the sparse-demand OD pair of Figure~\ref{fig:low_rank_approx} (b) are truncated. By comparing with HA, we can see the low-rank approximation reflects the average daily pattern of the sparse-demand OD pair, which is a reasonable approximation when considering the cumulative effects of OD pairs. Therefore, the rank truncation is crucial for filtering the noise in a large number of sparse-demand OD pairs.

\begin{figure}[ht]
\begin{center}
\includegraphics[]{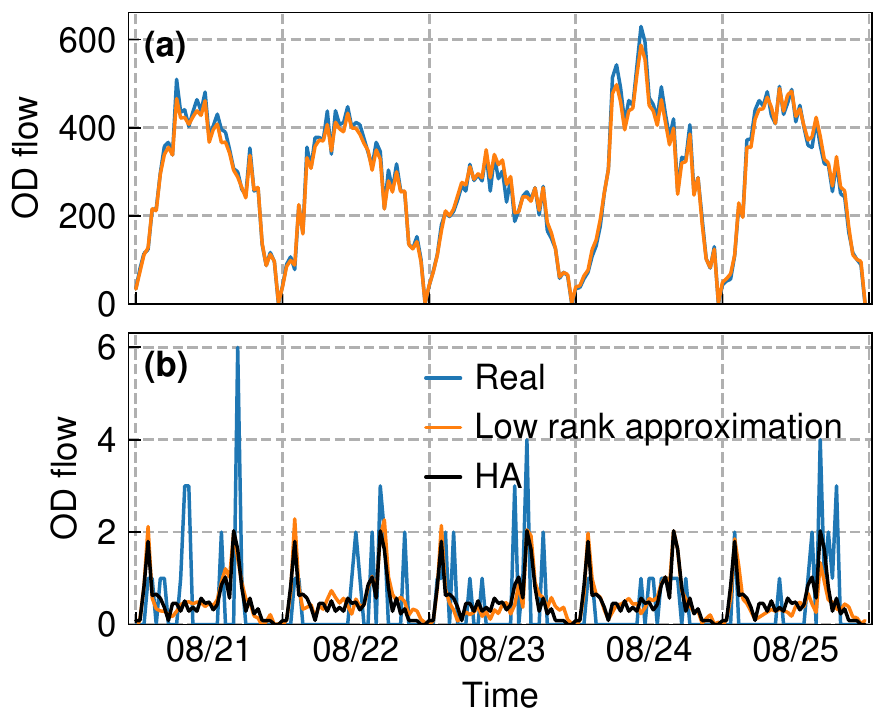}
\caption{Comparing OD flow with its low-rank approximation in two Guangzhou metro OD pairs. (a) and (b) corresponds to the (a) and (d) in Figure~\ref{fig:forecast}, respectively.}\label{fig:low_rank_approx}
\end{center}
\end{figure}

\begin{table}[htbp]
    \centering\small
    \caption{The difference between original data and the low-rank approximation.}
    \label{tab:result_lowrank}
    \begin{tabular}{ccc|c}
      \toprule
        Variable & Criterion & Guangzhou & Hangzhou \\
      \hline
      \multirow{3}[2]{*}{OD flow} & RMSE  & 2.82  & 3.00 \\
            & WMAPE & 28.80\% & 30.65\% \\
            & $R^2$   & 0.963 & 0.947 \\
      \hline
      \multirow{3}[2]{*}{Boarding flow} & RMSE  & 64.15 & 36.89 \\
            & WMAPE & 4.69\% & 5.95\% \\
            & $R^2$   & 0.996 & 0.994 \\
    \bottomrule
\end{tabular}%
\end{table}%

Table~\ref{tab:result_lowrank} further quantitatively evaluates the differences between the original OD data and its low-rank approximation. The results in Table~\ref{tab:result_lowrank} are the forecast upper bound of HW-DMD under the current rank-reduced space. By comparing Table~\ref{tab:result_lowrank} with the forecast of HW-DMD in Table~\ref{tab:result_od} and Table~\ref{tab:result_boarding}, we can see that a significant portion of the forecast error of HW-DMD essentially attributes to the rank truncation, but there is still space to improve the current HW-DMD model (e.g., by higher order, larger $r_X$, more regression covariates).

In choosing the rank-reduced space, the two rank parameters in HW-DMD balance the trade-off between forecast accuracy and model complexity. Based on the results of hyperparameter tuning, a further increase in rank $r_Y$ may result in overfitting (bringing the noise into the rank-reduced target data). We can further slightly improve the forecasting accuracy of HW-DMD by increasing the rank $r_X$ (related to the rank-reduced input data), but we here prefer a compact model with a smaller $r_X$ at the cost of slight accuracy loss. Lastly, the current HW-DMD chooses the rank-reduced space purely based on the leading singular values, which may be sensitive and not optimal when encountering significant data anomalies and failures \citep{duke2012error}. Using optimized DMD \citep{chen2012variants} or combined with an anomaly detection algorithm \citep{scherl2020robust} could further improve the current HW-DMD.

\subsection{Effect of the online update}\label{sec:effect_online}
The online update algorithm proposed in Section~\ref{sec:online} can update HW-DMD's parameters daily without storing historical data, which is computationally more efficient. On the Guangzhou metro training set, it takes $18.7\pm0.43$ seconds to train an HW-DMD model, while the online update only takes $1.0\pm0.03$ seconds for each day\footnote{We report the mean $\pm$ standard deviation of 20 runs. Tests were run on a computer with Intel$^\text{\tiny{\textregistered}}$ Core$^\text{TM}$ i7-8700 Processor and 24 gigabytes of RAM. Other benchmark models have much longer training time than HW-DMD (more than 1 minute for FNN and more than 20 minutes for TRMF and Conv-LSTM).} Besides the training time, we particularly care about if errors will accumulate if we keep using the online update algorithm for a long time. Therefore, we apply the online update algorithm to all the two-month data after the training set of the Guangzhou data set to evaluate its long-term effect. In comparison, we retrain two HW-DMD models (with $\rho$=0.92 and 1, respectively) every day using all historical data up until the latest. The results are shown in Figure~\ref{fig:longterm_online}. We summarize the key findings for Figure~\ref{fig:longterm_online} as follows:
\begin{itemize}
    \item The RMSE of a constant model gradually increases over time. This indicates the metro system's dynamics are time-evolving; thus, forecasting models should be updated/retrained regularly for better performance.

    \item The RMSE curve of the online update algorithm clings to the model ($\rho=0.92$) retrained every day by entire historical data, showing the online HW-DMD update algorithm works consistently well in long-term applications. For a large training set (e.g., after September in Figure~\ref{fig:longterm_online}), the online update approach even performs slightly better than retraining.

    \item Properly reducing the weight for old data improves the forecast. Compare $\rho=0.92$ with $\rho=1$ for the two retrained models; the benefits of forgetting the old data become more significant as the training data increases.

    \item The OD flow of certain weekdays can be harder to forecast. Especially for the forecast of September. The RMSEs on Fridays are significantly higher on than other weekdays.
\end{itemize}

\begin{figure}
\begin{center}
\includegraphics[]{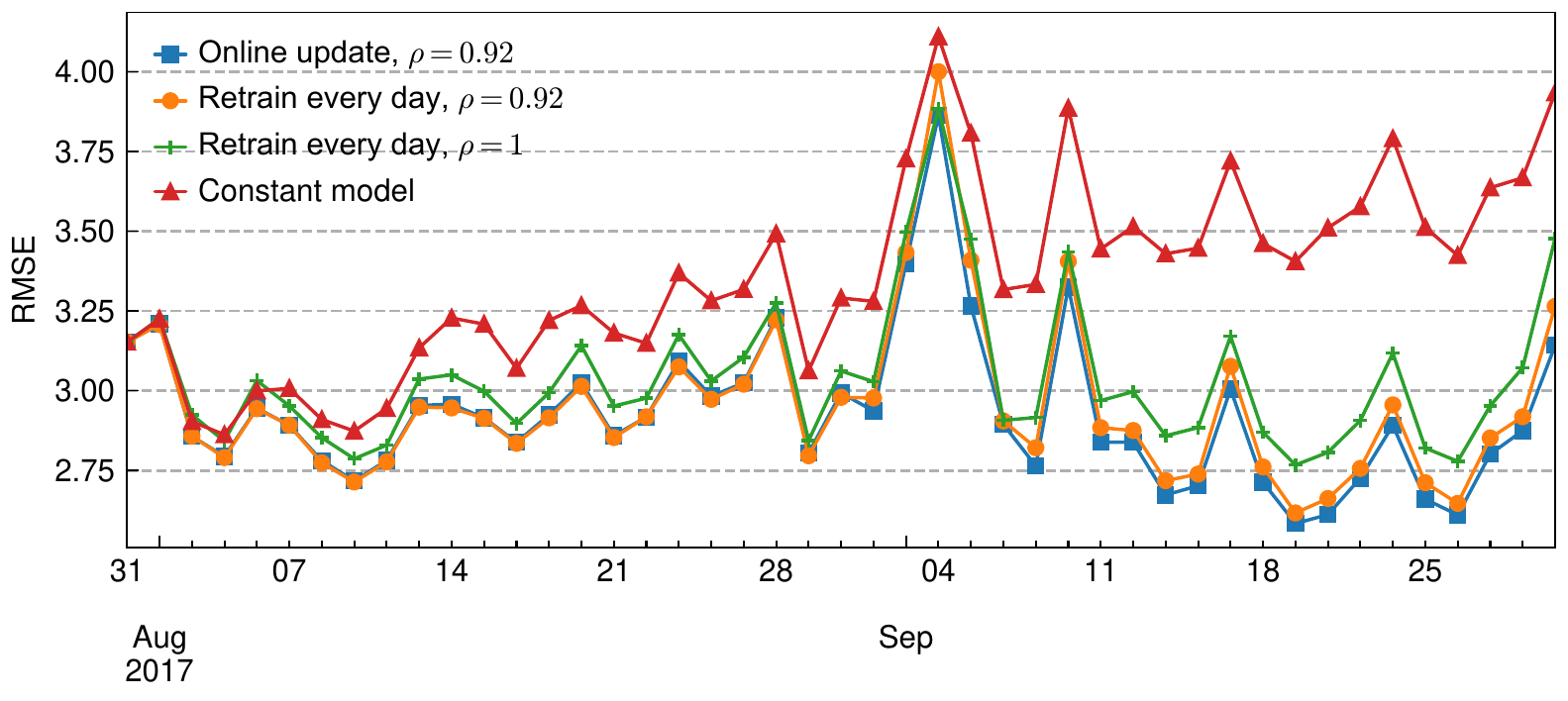}
\caption{The evolvement of forecast RMSE under different HW-DMD update methods (Guangzhou data set). Each marker represents the RMSE of forecasted OD flow during a day. Numbered dates in the horizontal axis are Mondays; weekends are skipped.}
\label{fig:longterm_online}
\end{center}
\end{figure}

Many forecasting models do not consider the time-evolving dynamics of a metro system. Regular retraining can be prohibitive, especially for complicated models (e.g., deep learning models). This experiment shows the online update algorithm for HW-DMD is a memory-saving and accurate approach to keep an HW-DMD model up to date.

\section{Conclusions and Discussion}\label{sec:conclusions}
This paper proposes a high-order weighted dynamic mode decomposition (HW-DMD) model to solve the real-time short-term OD matrix forecasting problem in metro systems. Experiments show that HW-DMD significantly outperforms common forecasting models under the high-dimensional, sparse, noisy, and skewed OD data. Particularly, we address the delayed data availability problem and the time-evolving dynamics of metro systems, which are often ignored in the literature. The idea of the forgetting rate and online update in dealing with a time-evolving system is also beneficial for other forecasting models. Moreover, the implementation of HW-DMD is simple, and the computation is very efficient, providing a promising solution to general high-dimensional time series forecasting problems.

We discuss several future research directions. (1) Current HW-DMD reshapes OD matrices into vectors for dimensionality reduction. However, performing dimensionality reduction directly on OD matrices may better utilize the column/row-wise correlations and produce more concise models \citep{chen2020autoregressive, gong2020online}. A difficulty in this direction is that the low-rank feature in metro OD matrices is relatively weak because the diagonal elements of metro OD matrices are all zeros. (2) Another future direction is to use a non-linear model instead of the current linear model in the reduced space, such as the deep factor model \citep{wang2019deep}. But a limitation for a non-linear model is that an online update method may be difficult to derive or even  impossible. (3) Lastly, current HW-DMD uses external features, such as the boarding flow, simply as covariates. Incorporating more general features (e.g., weather, events) and network structure to improve the HW-DMD is worth investigating.

\section*{Acknowledgement}
This research is supported by the Natural Sciences and Engineering Research Council (NSERC) of Canada, Mitacs, exo.quebec (https://exo.quebec/en), and the Canada Foundation for Innovation (CFI). The code for HW-DMD is available at \url{https://github.com/mcgill-smart-transport/high-order-weighted-DMD}.

\bibliographystyle{elsarticle-harv}
\bibliography{ref}
\end{document}